\begin{document}

\title{A Minimum-Labeling Approach for Reconstructing Protein Networks across Multiple Conditions}

\author{Arnon Mazza\inst{1} \and Irit Gat-Viks\inst{2} \and 
Hesso Farhan\inst{3} \and Roded Sharan\inst{1}}
\institute{
Blavatnik School of Computer Science, \newline
Tel Aviv University, Tel Aviv 69978, Israel. \newline
Email: \email{roded@post.tau.ac.il}.
\and
Dept. of Cell Research and Immunology,\newline
Tel Aviv University, Tel Aviv 69978, Israel. 
\and
Biotechnology Institute Thurgau, University of Konstanz,\newline
Unterseestrasse 47, CH-8280 Kreuzlingen, Switzerland.
}

\maketitle

\begin{abstract}
The sheer amounts of biological data that are generated in recent years have driven the development of network analysis tools to facilitate the interpretation and representation of these data. A fundamental challenge in this domain is the reconstruction of a protein-protein subnetwork that underlies a process of interest from a genome-wide screen of associated genes.
Despite intense work in this area, current algorithmic approaches are largely limited to analyzing a single screen and are, thus, unable to account for information on condition-specific genes, or reveal the dynamics (over time or condition) of the process in question. Here we propose a novel formulation for network reconstruction from multiple-condition data and devise an efficient integer program solution for it.
We apply our algorithm to analyze the response to influenza infection in humans over time as well as to analyze a pair of ER export related
screens in humans. By comparing to an extant, single-condition tool we demonstrate the power of our new approach in integrating data from multiple conditions in a compact and coherent manner, capturing the dynamics of the underlying processes.
\end{abstract}

\section{Introduction}
With the increasing availability of high-throughput data, network biology
has become the method of choice for filtering, interpreting and
representing these data. A fundamental problem in network biology is
the reconstruction of a subnetwork that underlies a process of interest
by efficiently connecting a set of implicated proteins (derived by
some genome-wide screen) in a network of physical interactions.
In recent years, several algorithms have been suggested for different variants
of this problem, including the Steiner tree based methods
of~\cite{Beisser10,Huang09}, the flow based approach of~\cite{Lotem09}
and the anchored reconstruction method of~\cite{Yosef11}.

Despite the plethora of network reconstruction methods,
these have been so far largely limited to explaining a
single experiment or condition. In practice, the network dynamically
changes over time or conditions, calling for reconstructions that
can integrate such data to a coherent picture of the activity
dynamics of the underlying pathways.

Here we tackle this multiple-condition scenario,
where the reconstructed subnetwork should explain in a coherent 
manner multiple experiments driven by the same set of proteins 
(referred to here as {\em anchor} proteins) 
while producing different subsets of affected proteins, or {\em terminals}. 
As in the single-condition case, a parsimonious assumption implies that
the reconstructed subnetwork should be of minimum size. In addition,
we require that its pathways, leading from the anchor to each of the 
terminals, are as homogeneous as possible in terms of the conditions, 
or {\em labels} they span.
We formulate the resulting minimum labeling problem, show that it is
NP complete and characterize its solutions.
We then offer an equivalent formulation that allows us to design
a polynomial integer linear programming (ILP) formulation
for its solution. We implement the ILP algorithm, {\em MKL},
and apply it to two data sets in humans concerning the response to
influenza infection and ER export regulation.
We show that the MKL networks are significantly enriched with respect to the related biological processes and allow obtaining of novel insights on the modeled processes. We further
compare MKL with an extant method, ANAT~\cite{Yosef11}, 
demonstrating the power of our algorithm in integrating data from multiple conditions in a compact and informative manner. 
For lack of space, some algorithmic details are omitted
or deferred to an Appendix.

\section{Preliminaries} \label{problem:description}
Let $G=(V,E)$ be a directed graph, representing a protein-protein
interaction (PPI) network, with vertex set $V$ and edge set $E$.
For a node $v \in V$, denote by $In(v)$ ($Out(v)$) the set of incoming 
(outgoing) edges of $v$, respectively.
Let $L=\{1,\ldots,k\}$ be a set of labels, representing $k\geq 1$ conditions.
Let $f:E\rightarrow 2^{(L)}$ be a labeling function that
assigns each edge of $E$ a possibly empty subset of labels. 
For $1\leq i\leq k$, we define $E_i(f):=\{e\in E:i\in f(e)\}$ to be 
the set of edges with label $i$.
We further denote
$f_{in}(v) = \bigcup_{e \in In(v)} f(e)$ and  
$f_{out}(v) = \bigcup_{e \in Out(v)} f(e)$.

We say that a labeling function $f$ is {\em valid} 
if for every terminal $t$ and each condition $i$ in which it is affected,
there is a path from the anchor to the terminal whose edges are
assigned with the label $i$.
Formally, we require a path from $a$ to $t$ that is restricted to $E_i(f)$.
We evaluate the {\em cost} of the labeling 
according to the number of labels $L(f)$
used and the number of edges $N(f)$ that are assigned with at least one label.
Formally, $L(f)=\sum_{e\in E}|f(e)|$ and 
$N(f)=|\{e \in E : f(e) \neq \emptyset\}|$. The cost is then defined
as $\alpha \cdot L(f) + (1-\alpha) \cdot N(f)$, where $0 \leq \alpha \leq 1$ 
balances the two terms.

We study the following {\bf minimum $k$-labeling (MKL)} problem on $G$:
The input is an anchor node $a\in V$ and 
$k \geq 1$ sets of terminals $T_1,\ldots,T_k$ in $V\setminus \{a\}$
that implicitly assign to each terminal the subset of conditions, or labels 
in which it is affected.
The objective is to find a valid labeling of the edges of $G$ of minimum cost.

Clearly, any valid labeling induces a subnetwork that can model each of the experiments: this subnetwork is comprised of those edges that are assigned a non-empty subset of labels. 
We note that for $k=1$ we have $L(f)=N(f)$, thus in this case the MKL problem is equivalent to the minimum directed Steiner tree problem.
The parameter $\alpha$ balances between two types of solutions: (1) A subnetwork with minimum number of labels ($\alpha=1$), which is equivalent to the union of independent Steiner trees of each of the experiments. (2) A subnetwork with minimum number of edges ($\alpha=0$), which is simply a Steiner tree that spans the union of all sets of terminals. However, general instances of MKL 
where $\alpha\neq 0,1$ can be solved neither by combining the 
independent Steiner trees of each of the experiments nor by constructing a single Steiner tree over all terminals. This is illustrated 
by the toy examples in Figures~\ref{fig:union_counter_example} 
and~\ref{fig:steiner_counter_example}. Next, we provide a characterization of solutions to the MKL problem. 


\begin{theorem}\label{v1theorem}
Given a solution labeling $f$ to an MKL instance,
let $G_i$ denote the subgraph of $G$ that is induced by the edges in $E_i(f)$.
Then $G_i$ is a directed tree rooted at $a$.
\end{theorem}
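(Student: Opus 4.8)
The plan is to argue by local exchange, using optimality to forbid any label that is not strictly needed to connect the anchor $a$ to the terminals of condition $i$. First I would record the basic cost-reduction principle: deleting a single occurrence of label $i$ from one edge $e$ decreases $L(f)$ by exactly one and leaves $N(f)$ unchanged or smaller, so it never increases the cost $\alpha L(f)+(1-\alpha)N(f)$. Hence I may take $f$ to be optimal and, breaking ties, to minimize $L(f)$ among all optimal labelings; under this choice no single label can be erased without destroying validity (this also covers the $\alpha=0$ boundary case, where a deletion may leave the cost unchanged but would strictly lower the secondary objective $L(f)$). In particular, for every $e\in E_i(f)$, removing its label $i$ must disconnect some terminal $t\in T_i$ from $a$ inside $G_i$, so $e$ lies on a directed $a$-to-$t$ path of $G_i$. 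This immediately yields three facts: every vertex of $G_i$ is reachable from $a$; every non-anchor vertex has in-degree at least one in $G_i$; and $a$ has in-degree zero, since an edge into $a$ could sit on a simple $a$-to-$t$ path only by closing a cycle through $a$, which the next step rules out.

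Next I would show that $G_i$ is acyclic. Suppose not, and fix a directed cycle $C$. Since every vertex of $G_i$ is reachable from $a$, some vertex $v$ of $C$ is reached by a path $P$ from $a$ that uses no edge of $C$ (take the first vertex of $C$ encountered along any $a$-to-$v$ path; its prefix touches $C$ only at $v$, hence uses no edge of $C$). Let $e$ be the edge of $C$ entering $v$. I claim that erasing label $i$ from $e$ preserves validity: any terminal $t$ previously reached by a path that used $e$ can instead be reached by the walk following $P$ to $v$ and then the suffix of the old path from $v$ onward, neither of which uses $e$. The existence of such a walk guarantees the required $a$-to-$t$ path, contradicting the minimality of $f$; hence $G_i$ has no directed cycle.

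With acyclicity in hand I would bound in-degrees. Suppose some $v\neq a$ has two incoming edges $e_1=(u_1,v)$ and $e_2=(u_2,v)$ in $E_i(f)$. Because $G_i$ is a DAG, no $a$-to-$u_1$ path can pass through $v$ (that would force a cycle $u_1\to v\rightsquigarrow u_1$), so $u_1$ stays reachable from $a$ after deleting $e_2$; therefore $v$, and every vertex or terminal formerly reached through $e_2$, can be rerouted through $e_1$. Thus label $i$ could be removed from $e_2$ without breaking validity, again contradicting minimality. So every vertex other than $a$ has in-degree exactly one. Combining the pieces — $a$ has in-degree $0$, every other vertex has in-degree $1$, the graph is acyclic, and every vertex is reachable from $a$ — gives precisely a directed tree rooted at $a$.

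I expect the acyclicity step to be the main obstacle: the rerouting must choose the deleted edge carefully (the cycle edge entering a vertex reachable while avoiding $C$), and it must be phrased in terms of walks rather than simple paths so that concatenation is legal. Once cycles are excluded, the in-degree argument becomes routine, since in a DAG the "reroute through the other parent" step can no longer be foiled by a hidden cycle back to $u_1$. The remaining bookkeeping — the reduction to an optimal, label-minimal $f$ and the degenerate case $T_i=\emptyset$ (where $G_i$ is the trivial tree) — is straightforward.
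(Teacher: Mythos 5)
Your proof is correct, and its core mechanism is the same as the paper's: an exchange argument showing that optimality (no label occurrence can be erased while preserving validity) forbids any redundancy in $G_i$. But the decomposition is genuinely different. The paper attacks the \emph{underlying undirected graph} directly: it establishes reachability of all vertices of $G_i$ from $a$, assumes an undirected cycle $v_1,\ldots,v_n$, picks the cycle vertex farthest from $a$, and deletes one of its two incident cycle edges (chosen by orientation) to obtain a cheaper valid solution — one argument that simultaneously rules out directed cycles and vertices of in-degree two, after which undirected-acyclicity plus connectivity immediately gives the arborescence. You instead prove \emph{directed} acyclicity first (deleting the cycle edge entering a vertex that is reachable while avoiding the cycle) and then bound in-degrees by rerouting through the other parent, using the DAG property to guarantee the reroute cannot loop back; you then assemble the arborescence from in-degree $0$ at $a$, in-degree $1$ elsewhere, and reachability. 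Your route costs an extra lemma but buys two things the paper leaves implicit: the rerouting is made rigorous by working with walks (so concatenation is legal) and, more substantively, you make explicit the minimality premise the paper's ``by minimality of the solution'' glosses over — for $\alpha=0$ an optimal solution can carry spurious labels at no cost, so the theorem really needs either $\alpha>0$ or a secondary tie-break minimizing $L(f)$, exactly the caveat you state. One small remark: your deferral of ``$a$ has in-degree zero'' to the acyclicity step is fine (an edge $(u,a)$ with $u$ reachable from $a$ closes a directed cycle), but it also follows directly from the observation that an edge into $a$ cannot lie on any simple path starting at $a$, which is essentially the paper's one-line disposal of such edges.
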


\begin{proof}
By definition, there is a directed path in $G_i$ from $a$ to each of the 
terminals in $T_i$. Clearly, any edge directed into $a$ can be removed 
without affecting the constraints of a valid solution. 
Thus, it suffices to show that the underlying undirected
graph of $G_i$ contains no cycles. By minimality of the solution, 
every vertex in $G_i$ is reachable from $a$ or else it can be removed
along with its edges. Suppose to the contrary that
$v_1,\ldots,v_n$ is a cycle in the underlying graph.
Since $a$ cannot be on this cycle and by the above observation, each of the
cycle's vertices is reachable from $a$. W.l.o.g., let
$v_1$ be the farthest from $a$ in $G_i$ among all cycle vertices. Then one can obtain
a smaller solution by removing one of the edges $(v_1,v_2)$, $(v_n,v_1)$
(depending on their orientations), a contradiction.
\end{proof}

As noted earlier, when $k=1$ the MKL problem is equivalent to the minimum directed Steiner tree problem, which is known to be NP-complete~\cite{GJ79}. A simple reduction from this case yields the following result:

\begin{theorem}\label{NP-complete} The MKL problem
is NP-complete for every $k \geq 1$.
\end{theorem}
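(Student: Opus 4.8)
The plan is to establish the two halves of NP-completeness separately: membership in NP and NP-hardness, the latter by a reduction from the minimum directed Steiner tree problem that works uniformly for every fixed $k$. I would work with the natural decision version of MKL, which asks whether a given instance admits a valid labeling of cost at most a prescribed bound $C$.

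For membership in NP, I would take the labeling $f$ itself as a certificate. Verification is polynomial: for each label $i$ and each terminal $t \in T_i$ I would test reachability of $t$ from $a$ using only the edges in $E_i(f)$ by a single graph search, which settles validity; and I would compute $L(f)$ and $N(f)$ directly from $f$ and compare $\alpha \cdot L(f) + (1-\alpha)\cdot N(f)$ against $C$. Since the certificate has size polynomial in the input and all checks run in polynomial time, MKL lies in NP.

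For hardness, the case $k=1$ is immediate: as already observed, when $k=1$ we have $L(f)=N(f)$, a valid labeling is exactly a set of edges connecting $a$ to every terminal, and by Theorem~\ref{v1theorem} a minimum one forms a directed tree rooted at $a$; thus MKL with $k=1$ coincides with the minimum directed Steiner tree problem, which is NP-hard. For an arbitrary fixed $k$ I would reduce from this case by padding with empty conditions. Given a Steiner instance $(G,a,T)$, I construct the MKL instance on the same graph and anchor with $T_1 := T$ and $T_i := \emptyset$ for $2 \leq i \leq k$. Since only label $1$ appears in any validity constraint, assigning any label other than $1$ to an edge can only increase $L(f)$ while never decreasing $N(f)$; hence some optimal labeling uses only label $1$, for which $L(f)=N(f)$ and the cost collapses to $N(f)$ irrespective of $\alpha$. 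Consequently the minimum MKL cost equals the size of a minimum directed Steiner tree connecting $a$ to $T$, so the two optima agree and the reduction is correct.

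The construction is transparently polynomial, so the only point demanding care is the exchange argument that an optimal solution may be assumed to avoid the dummy labels $2,\ldots,k$; this is exactly what makes the reduction insensitive to the value of $\alpha$ and lets a single construction cover every $k \geq 1$ at once. I do not anticipate a genuine obstacle beyond stating this argument cleanly, together with invoking Theorem~\ref{v1theorem} to guarantee that the object recovered in the reduction is a Steiner tree rather than an arbitrary connecting subgraph.
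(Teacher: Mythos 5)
Your proof is correct and takes essentially the same approach as the paper: the paper observes that the $k=1$ case is equivalent to the minimum directed Steiner tree problem (known to be NP-complete) and appeals to ``a simple reduction from this case'' for general $k$, which is precisely what your padding-with-empty-conditions construction supplies. Your NP-membership check and the exchange argument showing optimal solutions avoid the dummy labels are correct details that the paper leaves implicit.
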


\section{The MKL algorithm}
As the MKL problem is NP-complete, we aim to design an
integer linear program (ILP) for it, which will allow us to solve it to optimality or near-optimality for moderately-sized instances.
In order to design an efficient ILP, we first 
provide an alternative formulation of the MKL problem, expressed in terms of units of flow per label pushed from the anchor toward the terminals.
To this end, we extend the labeling to assign multi-sets rather than sets.
We denote a multi-set by a pair $M=\langle S,\mu \rangle$, where $S$ 
is a set and $\mu \colon S \rightarrow \mathbb{Z}^+$. We say that $x\in M$
if $x\in S$. We let $|M|$ denote the cardinality of the underlying set $S$.

The union $\uplus$ of two multi-sets $\langle S_1,\mu_1 \rangle$, $\langle S_2,\mu_2 \rangle$ is defined as the pair $\langle S,\mu \rangle$, where $S=S_1 \cup S_2$; for every $x \in S_1 \cap S_2$, $\mu(x)=\mu_1(x)+\mu_2(x)$; for $x \in S_1 \setminus S_2$, $\mu(x)=\mu_1(x)$; and for $x \in S_2 \setminus S_1$, $\mu(x)=\mu_2(x)$. We extend the definitions of $f_{in}(v)$ and $f_{out}(v)$ to
multi-sets using this union operator. 
Finally, for a vertex $v\neq a$
we let $L(v)=\{i\in L:v\in T_i\}$; note that for non-terminal nodes $L(v)=\emptyset$.

The alternative objective formulation is as follows: 
Find a multi-set label assignment $g$ that satisfies the 
following constraints:\\
(i) $g_{out}(a)=\langle L,\mu \rangle$, where $\mu(i)=|T_i|$ for every $i \in L$.\\ (The total amount of flow that goes out from the anchor per label equals the number of terminals that belong to the corresponding experiment). \\
(ii) For every $v\neq a$, $g_{in}(v) = g_{out}(v) \uplus L(v)$.\\
(For each label $i$, the incoming flow of a node $v$ equals its outgoing flow, incremented by 1 if $v$ is a terminal expressed in experiment $i$).\\
(iii) Denote $L(g)=\sum_{e\in E}|g(e)|$, $N(g)=|\{e \in E : g(e) \neq \emptyset\}|$, and let $0 \leq \alpha \leq 1$. Then $\alpha \cdot L(g) + (1-\alpha) \cdot N(g)$ is minimal.

We claim that the two formulations are equivalent. Given a multi-set labeling $g$, it is easy to transform it into a labeling $f$ by taking at each edge the underlying set of labels. One can show that the labeling $f$ is valid, i.e. for each $i$ there are paths in $E_i(f)$ that connect $a$ to each of the terminals in $T_i$.
For the other direction, given a labeling $f$ we
can transform it into a multi-set labeling $g$ by defining the multiplicity of a label $i$ at the edge $(u,v) \in E_i(f)$ as the number of terminals from $T_i$ in the subtree of $G_i$ that is rooted at $v$. It is easy to see that all constraints are satisfied by this transformation.

The above problem formulation can be made stricter by requiring that the set of incoming labels to a terminal is exactly the set of labels associated with the terminal. That is, for every terminal $t$ and $i \in L \setminus L(t)$, we 
require that $i \notin g_{in}(t)$. Our ILP formulation includes these
requirements to reflect the experimental observations, but in practice
the strict and non-strict versions produce very similar results.

\subsection{An ILP formulation}
In order to formulate the problem as an integer program, 
we define three sets of variables:
(i) binary variables of the form $y_e^i$, indicating for every $e \in E$ 
and $i \in L$ whether the edge $e$ is tagged with label $i$;
(ii) integer variables of the form $x_e^i$, indicating for every $e \in E$ and 
$i \in L$ the multiplicity of label $i$ (in the range of $0$ to $|T_i|$); and
(iii) binary variables of the form $z_e$, indicating for every $e \in E$ whether the edge $e$ participates in the subnetwork (carrying any label).
For a vertex $v\in V$, let $b_v^i$ be a binary indicator of whether $i\in L(v)$ or not. Let $\alpha$ be some fixed value in the range $[0,1]$.
The formulation is as follows (omitting the constraints on variable ranges):

\begin{eqnarray*}
\min & \alpha \cdot \sum_{e \in E, i \in L} y_e^i + (1-\alpha) \cdot \sum_{e \in E} z_e& \\
\text{s.t.:}&&\\
&y_e^i\leq x_e^i \leq |T_i| \cdot y_e^i & \forall e \in E ,i \in L\\
&y_e^i\leq z_e & \forall e \in E ,i \in L\\
&\sum_{e \in Out(a)} x_e^i = |T_i| & \forall i \in L\\
&\sum_{e \in In(v)} x_e^i = \sum_{e \in Out(v)} x_e^i + b_v^i
& \forall v \in V \setminus \{a\}, i \in L \\
&\sum_{e\in In(t)} y_e^i = 0 & \forall t \in T, i \notin L(t)
\end{eqnarray*}

\subsection{Implementation details and performance evaluation}
We used the commercial IBM ILOG CPLEX optimizer to solve the above ILP. 
Since solving an ILP is time consuming,
we devised a heuristic method for filtering the input network, aiming to capture those edges that the MKL optimal solution is more likely to use.
Specifically, we focused on (directed) edges that lie on a near shortest path (up to one edge longer than a shortest path) between the anchor and any of the terminals. Further, we accepted approximate solutions which enabled our experiments to end within at most two hours. Restricted by this time frame, we attained solutions deviating by at most 5\% and 7\% from the optimal value for the influenza dataset and the ER export dataset (see Experimental Results Section), respectively.

We tested the robustness of MKL to different choices of $\alpha$ on the two 
datasets we analyzed, observing
that the number of edges and labels varied by at most 8\% and 4\%,
respectively, over a wide range of values ($0.25-0.75$).
Thus, we chose $\alpha=0.5$ for our analyses in the sequel.

We evaluated a solution subnetwork using both network-based and biological measures. The network-based measures included the cost and a {\em homogeneity} score. We defined the homogeneity of a node $v$ as the frequency of the most frequent subset of labels among the $t(v)$ terminals under $v$, divided by $t(v)$; the homogeneity score of the subnetwork was then defined as the average over all nodes that span at least two terminals. 
To quantify the biological significance of the reconstructed subnetworks,
we measured the functional enrichment of their internal nodes (non-input nodes) with respect to validation sets that pertain to the process in question.
In addition, we provide expert analysis of the subnetworks.

We compared the performance of our method to that of the state-of-the-art ANAT reconstruction tool~\cite{Yosef11}, which was shown to outperform many existing tools in anchored reconstruction scenarios. 
For each data set, we applied ANAT (with its default parameters, and without the heuristic filtering) to each condition separately, then unified the results to get an integrated subnetwork. We labeled the solution straightforwardly: an edge $e$ was labeled $i$ if $e$ participated in the subnetwork that was constructed for condition $i$.

\section{Experimental results}
We tested the performance of our algorithm on two human data sets that 
concern the cellular response to PR8 influenza virus and ER export regulation. The two data sets were analyzed in the context of a human PPI network reported in~\cite{Yosef11} which contains 44,738 (bidirectional) interactions over
10,169 proteins. We compared our results to those of a previous tool, ANAT~\cite{Yosef11}, applying it independently to each of the terminal sets and taking the union of the subnetworks as the result.
We describe these applications below.

\subsection{Response to influenza infection}
We used data on the response to viral infection by the H1N1 influenza strain A/PR/8/34 ('PR8') in primary human bronchial epithelial cells~\cite{Shapira09}. The data set contains a collection of 135 virus-human PPIs and gene expression profiles, measured at different time points along the course of the infection. We focused on four time points (the ``conditions'') $t=2,4,6,8$ (i.e. $k=4$ labels), in each time point selecting those genes that were differentially expressed above a cutoff of 0.67~\cite{Shapira09}. We did not include time points earlier than $t=2$ or later than $t=8$, as the former had no or very few differentially expressed genes, while the latter induced an order of magnitude larger gene sets that are presumably associated with secondary responses.

We augmented the human network by the influenza-host PPIs and an auxiliary anchor node (named 'virus') which we connected to the 10 viral proteins. After the filtering, the network contained 1,598 proteins and 8,708 interactions.
The four terminal sets contained 8,19,19 and 49
proteins, respectively, with 77 total in their union, out of which 57 were reachable from the anchor. The resulting MKL subnetwork, which is shown in Figure~\ref{fig:pr8_mkl}, contains 127 edges over 123 nodes (117 human, 5 viral and the anchor node) with 60 internal (non-input) nodes. This is in contrast 
to the much larger ANAT solution on this data set, containing 173 nodes
and 106 internal ones. The subnetworks are quite different in terms of node composition, having 31 internal intersecting nodes. A summary of our network-based measures for the two subnetworks can be found in Figure~\ref{fig:stat_comp}.

\begin{figure}[htb]
\begin{center}
\includegraphics[width=\textwidth,bb=25 515 590 760]{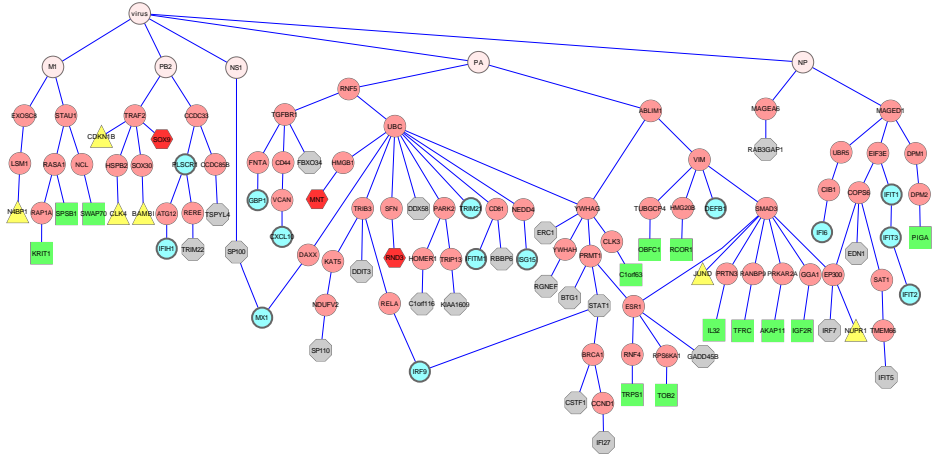}
\caption{\textbf{The MKL subnetwork for the influenza infection data.}
Terminal nodes are marked by their corresponding time point: t=2 - yellow/triangle; t=4 - green/square; t=6 - red/hexagon; t=8 - gray/octagon; more than one time point - cyan oval nodes with thick border. The root is the artificial virus node and the first level is composed solely of viral proteins.}
\label{fig:pr8_mkl}
\end{center}
\end{figure}

\begin{figure}[htb]
\begin{center}
\includegraphics[width=\textwidth,bb=50 640 562 788]{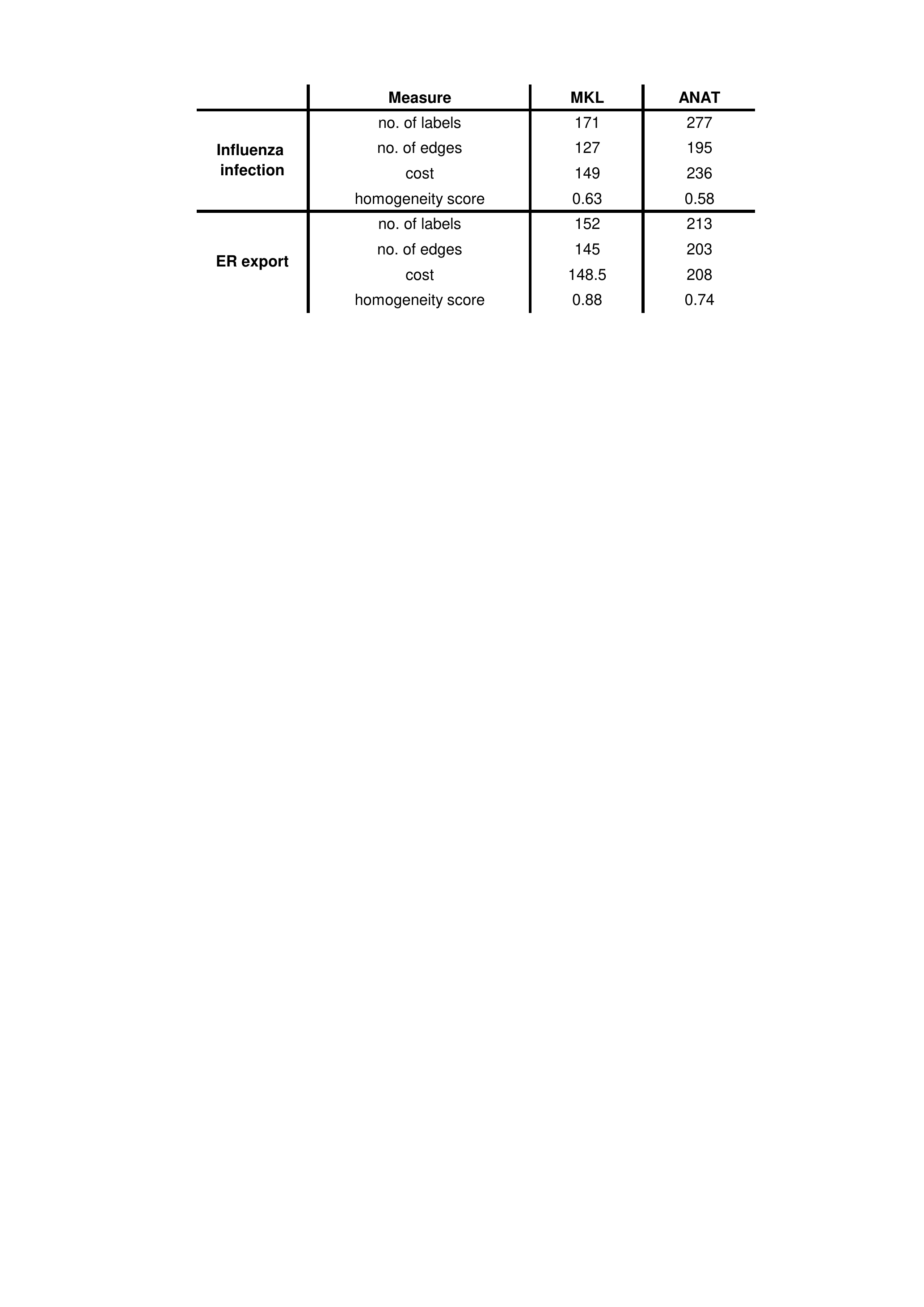}
\caption{Summary statistics for the MKL and ANAT subnetworks on 
the viral infection and ER export data sets.}
\label{fig:stat_comp}
\end{center}
\end{figure}

\begin{figure}[htb]
\begin{center}
\includegraphics[width=\textwidth,bb=50 654 533 788]{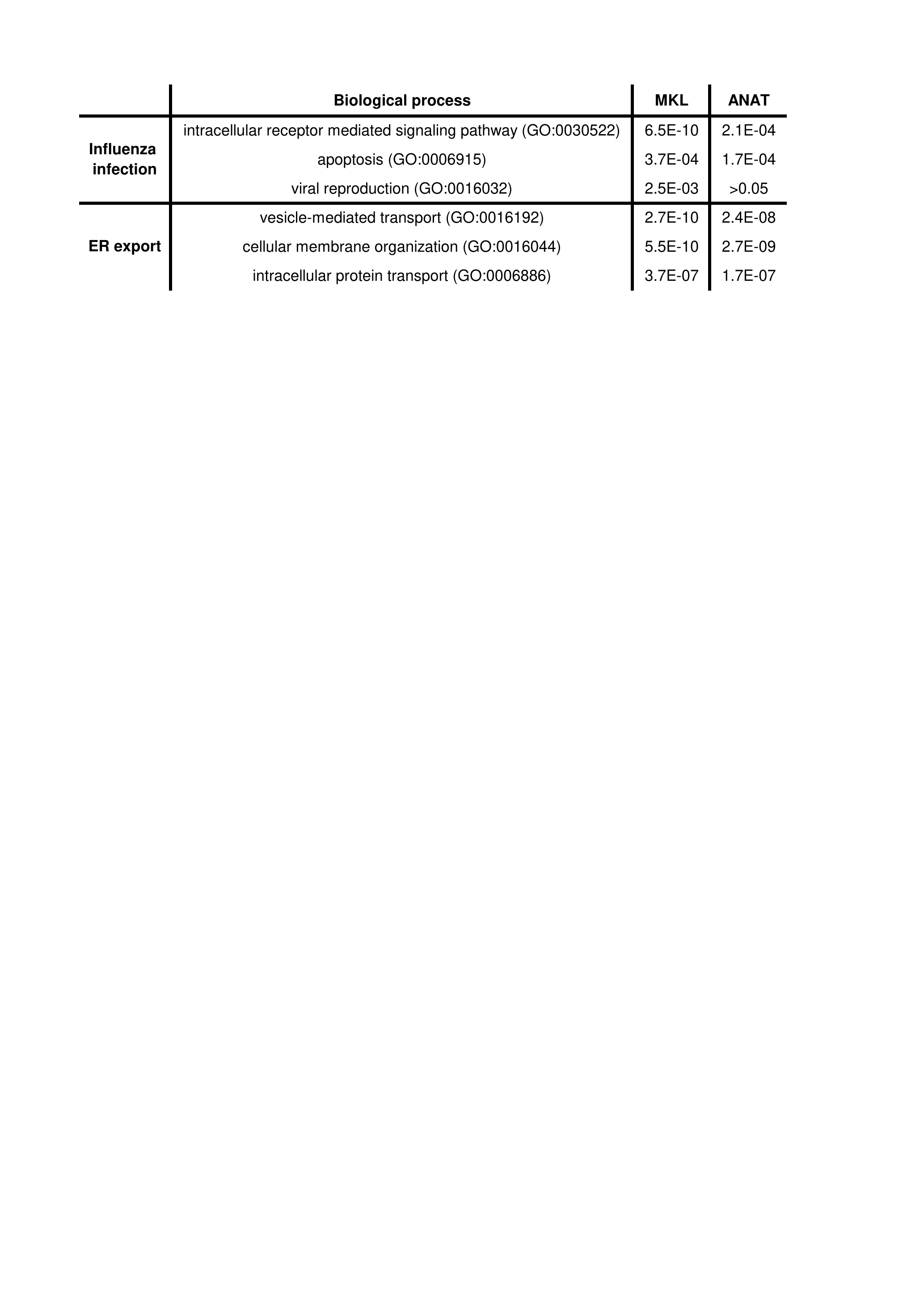}
\caption{Comparison of enrichments of the MKL and ANAT solutions with respect to influenza infection and ER export related processes.}
\label{fig:enrich}
\end{center}
\end{figure}

Next, we scored the enrichment of both subnetworks 
with viral infection related processes such as: viral reproduction, intracellular receptor mediated signaling pathway and apoptosis.
The MKL subnetwork was highly enriched with these processes, outperforming the ANAT subnetwork (Figure~\ref{fig:enrich}).
In the following we present a detailed analysis of the MKL inferred subnetwork and demonstrate its high predictive power and its ability to characterize viral proteins and host mediators in terms of their temporal effect on their targets. Specifically, we show that this subnetwork suggests that an imbalance in the timing of effect between viral proteins (e.g. M1 and NP) or between host mediators (such as Smad3 and UBC) can reveal their different kinetics of influence on host proteins. This is in large contrast to the results produced by the ANAT tool, which does not provide any timing imbalance among downstream targets of viral proteins or host mediators (data not shown due to space 
constraints).

We first present an example of an inferred pathway, selected to demonstrate our MKL approach. The PA-Rnf5-UBC-DAXX-MX1 and NS1-SP100-MX1 paths are a clear example of a predicted pathway that is well supported by extant experimental findings. It is consistent with the known role of both DAXX and SP100 as major components of the PML bodies which control together the localization of MX1 in distinct nuclear components~\cite{Engelhardt04}. Further, DAXX is known to be regulated {\em in vivo} by ubiquitination through UBC and Rnf5~\cite{Wagner11}, supporting our placement of DAXX downstream to UBC.

The MKL network shows that the targets of some human proteins have a common temporal behavior, whereas others have different downstream temporal responses. 
This is consistent with the fact that PPIs naturally represent different mechanisms that might differ in their kinetics. For example, the targets of Traf2 are all early responding genes whereas the targets of Ccdc33 have longer temporal responses. The early effect of Traf2 is consistent with the findings that Traf2 is a signaling transduction kinase protein with fast kinetics. A similar characterization can be applied to other signal transduction proteins such as Smad3. Conversely, the Ccdc33 protein regulates its targets in late time points (6-8 hours) by an unknown mechanism. The results here suggest that this mechanism is orders of magnitude slower than phosphorylation. Similarly, the control of Rnf5 and UBC is expected to show fast kinetics through ubiquitination. In contrast, we find that all the Rnf5/UBC 19 targets are controlled in late time points (6-8 hours), suggesting a novel temporal (late) control on the activity of Rnf5-specific UBC-based ubiquitination during t
 he course of influenza infection.

\subsection{Regulation of endoplasmic reticulum (ER) export}
The journey of secretory proteins, which make up roughly 30\% of the human proteome starts by exit from the ER. Export from the ER is executed by so called COPII vesicles that bud from ER exit sites (ERES).
A protein that is of central importance for ERES biogenesis and maintenance is Sec16A, a large (\textasciitilde 250 kDa) protein that localizes to ERES and interacts with COPII components~\cite{Watson06}.
We have recently performed a siRNA screen to test for kinases and 
phosphatases that regulate the functional organization of the early 
secretory pathway~\cite{Farhan10}. Among the hits identified were 
64 kinases/phosphatases that when depleted result in a reduction in the 
number of ERES. Thus, these are 64 different potential regulators of 
ER export. More recently, a full genome screen tested for genes that 
regulate the arrival of a reporter protein from the ER to the 
cell surface~\cite{Simpson12}. There, the depletion of 45 proteins was shown 
to affect ERES. However, whether the defect in arrival of the reporter 
to the cell surface was due to an effect on ER export or due to 
alterations in other organelles along the secretory route 
(e.g., Golgi apparatus) remains to be determined.

We applied MKL to these two screens, serving as two ``conditions'' highlighting different repertoires of ER export signaling-regulatory pathways.
As the two screens do not intersect (most likely due to differences in read-outs), there were 109 terminals overall, 
85 of them reachable in our human PPI network.
Due to its central importance for ER export and ERES formation, we chose Sec16A as the anchor for this application. After filtering, the network contained 1,907 nodes and 11,329 edges.
The resulting MKL subnetwork, containing 145 nodes and 59 internal ones, 
is depicted in Figure~\ref{fig:erexp_mkl}. 
In comparison, the ANAT solution contains 190 nodes and 104 internal ones (with
35 internal nodes common to the two solutions).
As evident from Figure~\ref{fig:stat_comp}, the MKL solution has substantially lower cost and is more homogeneous.

We assessed the functional enrichment of the MKL subnetwork with biological processes that are of relevance to ER export such as cellular membrane organization, intracellular protein transport and vesicle-mediated transport. All three categories were highly enriched and the $p$-values attained compare favorably to those computed for the ANAT solution (Figure~\ref{fig:enrich}).

Interestingly, 4 proteins of the MKL solution are related to autophagy (two of them internal nodes, $p=0.02$).
Autophagy is an endomembrane-based cellular process that is responsible for capturing and degradation of surplus organelles and proteins. Links between ER export and autophagy have been proposed~\cite{Ishihara01} but there is very limited mechanistic insight into this link. The vesicle-mediated transport process includes the STX17, SNAP29 and ULK1 proteins. The latter is a kinase that initiates the biogenesis of autophagosomes~\cite{Mizushima10}. STX17 and SNAP29 were recently proposed to be involved in autophagy by promoting the formation of ER-mitochondria contact sites and the fusion of autophagosomes with lysosomes~\cite{Hamasaki13,Itakura13}. As the MKL network was generated with terminals and an anchor that regulate ER export, we propose that this approach could be used to identify the molecular link between secretion and autophagy in the future.

\begin{figure}[htb]
\begin{center}
\includegraphics[width=\textwidth,bb=40 500 570 746]{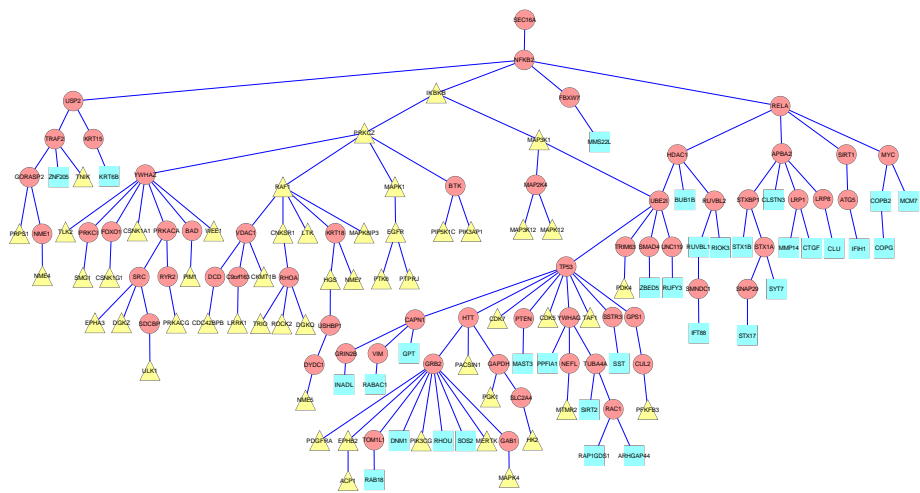}
\caption{\textbf{The MKL subnetwork for the ER export data.}
Terminal nodes are colored/shaped according to the screen they were discovered in: \cite{Farhan10} - yellow/triangle, and \cite{Simpson12} - cyan/square.}
\label{fig:erexp_mkl}
\end{center}
\end{figure}

\section{Conclusions}
The protein-protein interaction network represents a combination of diverse regulation and interaction mechanisms operating in different conditions and
time scales. Integrating such data in a coherent manner to describe a process
of interest is a fundamental challenge, which we aim to tackle in this work
via a novel ILP-based minimum labeling algorithm.
We apply our algorithm to two human data sets and show that it attains 
compact solutions that capture the dynamics of the data and align well
with current knowledge. We expect this type of analysis to gain further 
momentum as composite data sets spanning multiple conditions and time points
continue to accumulate.

\paragraph{Acknowledgments.}
AM was supported in part by a fellowship from the Edmond J. Safra Center for Bioinformatics at Tel Aviv University.
RS was supported by a research grant from the Israel Science Foundation
(grant no. 241/11).

\bibliographystyle{splncs}
\bibliography{labeling}

\appendix

\section{Supplementary Figures}

\renewcommand{\thefigure}{S\arabic{figure}}

\begin{figure}[htb]
\begin{center}
\includegraphics[width=\textwidth,bb=80 520 550 752]{}
\end{center}
\caption{\textbf{The optimal MKL solution for $\alpha=0.5$ is neither the union of label-specific Steiner trees nor a subgraph of it.}
In this instance $k=2$, $T_1=\{x_1,x_2\}$ and $T_2=\{x_2,x_3\}$.
The optimal Steiner trees for $T_1$ and $T_2$ are composed of the blue (solid) and red (dashed) edges, resp.
The best MKL solution that uses only edges of the union can be achieved by pushing label 1 over the blue edges and 2 over the red edges, resulting in $14$ labels and $14$ edges.
In contrast, the optimal solution, whose labels appear on top of the figure, contains the blue and green (waved) edges, 
spanning 15 labels and 9 edges.
}
\label{fig:union_counter_example}
\end{figure}

\begin{figure}[h]
\begin{center}
\includegraphics[width=\textwidth,bb=97 565 500 720]{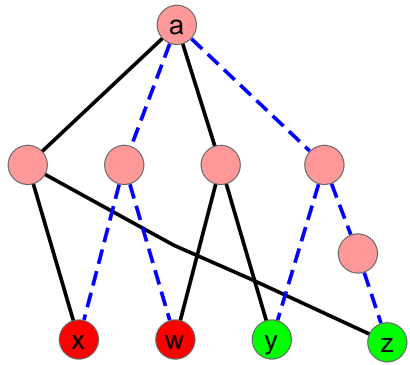}
\end{center}
\caption{\textbf{The optimal MKL solution for $\alpha=0.6$ is not a minimum Steiner tree over all terminals.}
In this instance $k=2$, $T_1=\{x,w\}$ and $T_2=\{y,z\}$. The black (solid) edges form a Steiner tree with 6 edges and 8 labels, whereas the blue (dashed) edges constitute an MKL solution with 7 edges and 7 labels.}
\label{fig:steiner_counter_example}
\end{figure}

\end{document}